\documentclass{amsart}
\pagestyle{plain}
\usepackage{amsmath,amssymb,amscd,amsfonts}
\usepackage{graphicx}

%% Penalize hyphenation
\hyphenpenalty=2000\relax %10000
\exhyphenpenalty=2000\relax %10000
\sloppy

\hfuzz1pc % (Don't bother to report overfull boxes if overage is < 1pc)

%       Theorem environments
% theorem numbering gets reset to 1 in each section
\theoremstyle{plain}
\newtheorem{thm}{Theorem}[section]

\newtheorem{prop}[thm]{Proposition}

\theoremstyle{definition}
% most others are numbered together with theorems
\newtheorem{dfn}[thm]{Definition}

\newtheorem{problem}[thm]{Problem}

% remark style
\theoremstyle{remark}
\newtheorem{rem}[thm]{Remark}
%\newtheorem{notation}{Notation}
%\renewcommand{\thenotation}{}  % to make the notation environment unnumbered

% equation counter will be reset at the start of each section
\numberwithin{equation}{section}

%       Math definitions
% Blackboard Bold letters

% operators

\begin{document}
\title[On the existence of infinitely many universal tree-based networks] % running head version
{On the existence of infinitely many universal tree-based networks}
\author{Momoko Hayamizu$^{\ast\dagger}$}
\thanks{*~Department of Statistical Science, The Graduate University of Advanced Studies}
\thanks{\dag~The Institute of Statistical Mathematics}

\address[Momoko Hayamizu]{The Institute of Statistical Mathematics\\
 10-3 Midori-cho\\ Tachikawa, Tokyo 190-8562\\ Japan}
\email[]{hayamizu@ism.ac.jp}

\subjclass[2010]{Primary 05C05; Secondary 05C20, 05C25}
%05C05 Trees
%05C20 Directed graphs (digraphs), tournaments
%05C25 Graphs and abstract algebra (groups, rings, fields, etc.)
\keywords{reticulate evolution, binary phylogenetic network, tree-based network}
\maketitle

\begin{abstract}
A tree-based network  on a set $X$ of $n$ leaves is said to be  \emph{universal} if any rooted binary phylogenetic tree on $X$ can be its base tree. 
Francis and Steel showed that there is a universal tree-based network on $X$ in the case of $n=3$, and asked whether such a network exists in general. We settle this problem by proving that there are infinitely many universal tree-based networks for any $n>1$.
\end{abstract}

\section{Introduction}\label{Introduction}
Throughout this paper,   $n$ denotes  a natural number that is greater than 1 and $X$ represents the set $\{1, 2, \cdots,n\}$.  All graphs considered here are  directed acyclic graphs. A graph $G^\prime$ is said to be a  \emph{subdivision} of a graph $G$ if $G^\prime$ can be obtained from $G$ by inserting vertices into arcs of $G$ zero or more times. Given a vertex  $v$ of a graph with $indeg(v)=outdeg(v)=1$,  \emph{smoothing} (or \emph{supressing}) $v$ refers to removing $v$ and then adding an arc from the parent to the child of $v$. Two graphs are said to be \emph{homeomorphic} if they become isomorphic after smoothing all vertices of in-degree one and out-degree one. 

For the reader's convenience, we briefly recall the relevant background from~\cite{FS} (see~\cite{SS} for the terminology in phylogenetics).

\begin{dfn}\label{BPN}
A \emph{rooted binary phylogenetic network on $X$} is defined to be a directed acyclic graph $(V,A)$ with the following properties: 

\begin{itemize}
\item $X=\{v\in V\,|\,indeg(v)=1,\, outdeg(v)=0\}$; 
\item there is a unique vertex $\rho\in V$ with $indeg(\rho)=0$ and $outdeg(\rho)\in\{1,2\}$; 
\item for all $v\in V\setminus\{X\cup \{\rho\}\}$, $\{indeg(v), outdeg(v)\}=\{1,2\}$. 
\end{itemize}
The vertices in $X$ are called \emph{leaves}, and the vertex $\rho$ is called the \emph{root}.
\end{dfn}

\begin{dfn}
Suppose $\mathcal{T}=(V,A)$ is a rooted binary phylogenetic tree on $X$. 
A rooted binary phylogenetic network $\mathcal{N}$ on $X$ is said to be a \emph{tree-based network on $X$ with base tree $\mathcal{T}$} if there are a subdivision $\mathcal{T^\prime}=(V^\prime,A^\prime)$ of  $\mathcal{T}$  and a set $I$ of  mutually vertex-disjoint arcs between  vertices in $V^\prime\setminus V$ such that $(V^\prime, A^\prime\cup I)$ is acyclic and  is homeomorphic to $\mathcal{N}$. 
The vertices in $V^\prime\setminus V$ are called  \emph{attachment points}, and the arcs in $I$ are called \emph{linking arcs}.
\end{dfn}

Tree-based networks can have an important role to play in modern phylogenetic inference as they can represent more intricate or realistic relationships among taxa than phylogenetic trees without compromising the concept of `underlying trees' (\textit{cf.},~\cite{FS, Huson}).

In order to state the problem formally, we now introduce the notion of universal tree-based networks. A tree-based network on $X$ is said to be \emph{universal} if any binary phylogenetic tree on $X$ can be a base tree. We can define universal tree-based networks in a more concrete manner with the number $(2n-3)!!$ of binary phylogenetic trees on $X$ as follows.

\begin{dfn}\label{universalTBN}
A tree-based network $\mathcal{N}=(V,A)$ on $X$ is said to be \emph{universal} if for any binary phylogenetic tree $\mathcal{T}^{(i)}$ on $X$ ($i\in\{1, 2, \cdots, (2n-3)!!\}$), there is a set  $I^{(i)}\subset A$ of linking  arcs such that $(V,A\setminus I^{(i)})$ is homeomorphic to $\mathcal{T}^{(i)}$.
\end{dfn}

\begin{problem}[\cite{FS}]\label{problem}
Does a universal tree-based network on a set $X$ of $n$ leaves exist for all $n$?
\end{problem}
 
This problem is fundamental because it explores whether a phylogenetic tree on $X$ is always reconstructable from a tree-based network on $X$. In~\cite{FS}, Francis and Steel pointed out that the answer is `yes' for $n=3$.  
In this paper, we will completely settle their question in the affirmative and provide further insights into universal tree-based networks (Theorem~\ref{thm}).

\section{Preliminaries}\label{Preliminaries}
Here, we slightly generalise the concept of tree shapes.  
Given a tree-based network $\mathcal{N}$ on $X$, ignoring the labels on the leaves of $\mathcal{N}$ results in an unlabelled tree-based network $N$ with $n$ leaves. We use the two different types of symbols, such as $N$ and $\mathcal{N}$, to mean unlabelled and labelled tree-based networks, respectively.  Two tree-based networks $\mathcal{N}$ and $\mathcal{N}^\prime$ on $X$ are said to be \emph{shape equivalent}  if $N$ and $N^\prime$ are isomorphic. This equivalence relation partitions a set of the tree-based networks on $X$ into equivalence classes called \emph{tree-based network shapes with $n$ leaves}.

\begin{dfn}[]\label{universalTBNshape}
A tree-based network shape $N$ with $n$ leaves is said to be \emph{universal}  if for any rooted binary phylogenetic tree shape $T^{(i)}$ with $n$ leaves ($i\in\{1,2,\cdots,r_n\}$), there is a set $I^{(i)}$ of linking arcs such that $(V,A\setminus I^{(i)})$ is homeomorphic to $T^{(i)}$. Here, $r_n$ denotes the  number of rooted binary phylogenetic tree shapes with $n$ leaves. 
\end{dfn}

The following proposition is not directly relevant to this paper, but  ideas behind it, which are summarised in Remark~\ref{rem}, will be useful in the proof of Theorem~\ref{thm}. 

\begin{prop}[\cite{Harding}]\label{Harding}
Let  $r_1:=1$ and $k\in\mathbb{N}$ with $k>1$. Then, we have the following recurrence equation:  
\begin{eqnarray*}
r_n=\left\{ \begin{array}{ll}
1 & \text{if $n=2$;} \\
\sum_{i=1}^{k-1}r_i r_{n-i} & \text{if $n=2k-1$;} \\
\frac{r_k (r_k+1)}{2}+\sum_{i=1}^{k-1}r_i r_{n-i} & \text{if $n=2k$.} \\
\end{array} \right.
\end{eqnarray*}
\end{prop}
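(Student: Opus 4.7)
The plan is to fix $n\geq 3$ and decompose any rooted binary phylogenetic tree shape $T$ with $n$ leaves according to the two maximal subtrees rooted at the children of the root. Since $T$ is unlabelled and its two root-children are interchangeable, $T$ is determined by an \emph{unordered} pair $\{T_1,T_2\}$ of tree shapes whose numbers of leaves $i$ and $n-i$ satisfy $1\leq i\leq n-1$; conversely, any such unordered pair yields a unique shape on $n$ leaves. Thus $r_n$ equals the number of such unordered pairs, and the proof reduces to counting them correctly in the two parity cases.

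For odd $n=2k-1$, we have $i\neq n-i$ for every admissible $i$, so each unordered pair has a unique representative with $i<n-i$, i.e.\ with $i\in\{1,\ldots,k-1\}$. The two subtrees are distinguishable by the sizes of their leaf sets, and the number of such pairs is $r_i\cdot r_{n-i}$; summing over $i$ gives $r_n=\sum_{i=1}^{k-1}r_i r_{n-i}$. For even $n=2k$, the same argument handles pairs with $i\neq n-i$ and contributes $\sum_{i=1}^{k-1}r_i r_{n-i}$, while the remaining pairs are those in which both subtrees have exactly $k$ leaves. In that remaining case the subtrees cannot be distinguished by size, so one must count \emph{unordered pairs with repetition} drawn from the set of $r_k$ shapes on $k$ leaves, which gives $\binom{r_k}{2}+r_k=\tfrac{r_k(r_k+1)}{2}$. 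Adding the two contributions yields the stated recurrence, and the base case $r_2=1$ is immediate since the only shape on two leaves is the cherry.

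The only step that requires care is precisely this unordered-pair count in the even case: treating $(T_1,T_2)$ and $(T_2,T_1)$ as distinct when $i=n-i$ would overcount off-diagonal pairs by a factor of two, while forgetting the diagonal pairs with $T_1\cong T_2$ would undercount by $r_k$; the expression $\tfrac{r_k(r_k+1)}{2}=\binom{r_k+1}{2}$ encodes both corrections simultaneously. Everything else is routine case analysis on the root split.
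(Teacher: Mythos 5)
Your proof is correct: decomposing a shape at the root into an \emph{unordered} pair of first-order subshapes, handling the unequal-size splits by choosing the representative with $i<n-i$, and counting the equal-size splits for $n=2k$ as unordered pairs with repetition, $\binom{r_k+1}{2}=\tfrac{r_k(r_k+1)}{2}$, is exactly the argument behind Harding's recurrence. The paper itself gives no proof (it cites \cite{Harding}), but your decomposition is precisely the one summarized in Remark~\ref{rem}, so this is the intended approach and there are no gaps.
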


\begin{rem}\label{rem}
We assume that $T_1$ represents a rooted chain shape. Any  rooted binary phylogenetic tree shape $T_n$ with $n$ leaves can be decomposed into two first-order subshapes $T_m$ and $T_{n-m}$   with $m\in\mathbb{N}$. In other words, using Harding's notation~\cite{Harding}, we can write $T_n = T_m + T_{n-m}$. 
\end{rem}

\section{Results}\label{Results}
\begin{thm}[]\label{thm}
For any natural number $n>1$, there are infinitely many universal tree-based networks on a set $X$ of $n$ leaves.
\end{thm}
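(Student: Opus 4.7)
The plan is to establish the theorem in two stages: first, exhibit at least one universal tree-based network on $X$ for every $n > 1$; second, show that from any such network one can produce infinitely many.

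For the second stage, suppose $\mathcal{N} = (V, A)$ is a universal tree-based network on $X$. Fix a topological ordering of $V$, pick two distinct arcs $e_1 = (x_1, y_1)$ and $e_2 = (x_2, y_2)$ with $y_1$ preceding $x_2$ in that order, subdivide them with fresh vertices $u$ and $v$, respectively, and insert a new arc $(u, v)$. The resulting network $\mathcal{N}'$ is acyclic by the choice of order, and $u, v$ both satisfy $\{indeg, outdeg\} = \{1,2\}$, so $\mathcal{N}'$ obeys Definition~\ref{BPN}. For any base tree $\mathcal{T}$ of $\mathcal{N}$ witnessed by linking-arc set $I \subset A$, the enlarged set $I' := I \cup \{(u,v)\}$ witnesses $\mathcal{T}$ as a base tree of $\mathcal{N}'$, since deleting $(u,v)$ and smoothing $u$ and $v$ recovers $\mathcal{N}$ exactly. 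Iterating this construction yields an infinite sequence of universal tree-based networks with strictly increasing vertex counts, hence pairwise non-isomorphic.

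For the first stage I would argue by strong induction on $n$, guided by Harding's decomposition $T_n = T_m + T_{n-m}$ (Remark~\ref{rem}). The base case $n = 2$ is immediate, the cherry itself being universal. For the inductive step, I would construct $\mathcal{N}_n$ by taking a root with two descending ``rails,'' each carrying a copy of a universal network on a smaller leaf set, together with cross linking arcs that let individual leaves be shunted between the two rails. The choice of which cross arcs to retain as linking arcs would then determine the root split $T_m + T_{n-m}$, after which the inductive universality on each rail refines the two sides into any prescribed shape.

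The principal obstacle lies in the first stage: making the ``rails plus cross arcs'' explicit and verifying that the resulting $\mathcal{N}_n$ simultaneously realizes all $(2n-3)!!$ labeled rooted binary phylogenetic trees while satisfying the in/out-degree and acyclicity constraints of Definition~\ref{BPN}. A careful bookkeeping of which cross arcs must be present to enforce a given root split, and which must be absent to avoid extra reticulations in the smoothed tree, is the technical heart of the argument. Once the construction survives this check, the replication mechanism of the second stage immediately promotes the single example to an infinite family, establishing the theorem.
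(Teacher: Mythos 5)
Your second stage is essentially sound and close in spirit to what the paper does (the paper obtains infinitude by appending ``redundant crossovers'' among the terminal arcs of one universal network; your single-arc insertion is a cleaner variant). One technical wrinkle: if the arc $e_1$ you subdivide happens to be a linking arc in the witness set $I$ for some base tree $\mathcal{T}$, then $e_1$ is no longer an arc of $\mathcal{N}'$, so $I' = I \cup \{(u,v)\}$ is not a subset of the arc set of $\mathcal{N}'$, and deleting the two halves of $e_1$ together with $(u,v)$ would leave $u$ isolated rather than smoothable. This is easily repaired by always subdividing terminal arcs, which are tree arcs in every base-tree decomposition, but as written the step needs that restriction.

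The genuine gap is the first stage, which you yourself flag as ``the principal obstacle'': you never actually produce a universal tree-based network for general $n$, and that single existence claim is the entire content of the theorem --- it is precisely the open question of Francis and Steel. The ``two rails with cross arcs'' sketch is unverified, and it also conflates two tasks that the paper deliberately separates: (a) realizing every rooted binary tree \emph{shape} as a base tree shape, and (b) realizing every assignment of the $n$ labels in $X$ to the leaves. The paper handles (a) with an explicit shape $U_n$ obtained from a rooted caterpillar by adding $(n-1)(n-2)/2$ linking arcs and $(n-1)(n-2)$ attachment points, proved universal by induction via the decomposition $T_{n+1} = T_k + T_{n-k+1}$ of Remark~\ref{rem}: deleting suitable mutually vertex-disjoint arc sets splits $U_{n+1}$ into pieces homeomorphic to $U_k$ and $U_{n-k+1}$, and universality of the pieces finishes the step. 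It then handles (b) by appending, for each of the $n!-1$ nontrivial permutations of $X$, a chain of adjacent crossovers $\sigma_{ij}$ on the terminal arcs realizing that permutation as a product of adjacent transpositions. Your rails picture addresses only the root split and says nothing about how labels reach the correct side of that split, nor about maintaining the degree conditions of Definition~\ref{BPN} and the vertex-disjointness of the linking arcs; until that construction is made explicit and checked, the proof is incomplete at its core.
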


\begin{proof}
First, we will show that there is a universal tree-based network shape with $n$ leaves. Let $U_n$ be a rooted binary phylogenetic network shape with $n$ leaves as illustrated in the left panel of Figure 1, which can be obtained  by adding $(n-1)(n-2)/2$ linking arcs and $(n-1)(n-2)$ attachment points to a rooted caterpillar tree shape with $n$ leaves. By definition, $U_n$ is a tree-based network shape with $n$ leaves. We will prove that $U_n$ is universal by induction. (i) It is easy to see that $U_2$ and $U_3$ are universal.  (ii) Assuming $U_k$ is universal for any $k\in\mathbb{N}$ ($2\leq k\leq n$), we will show that $U_{n+1}$ is universal. We claim that any binary phylogenetic tree shape $T_{n+1}$ with $T_{n+1}=T_{n}+T_{1}$ can be a base tree shape of $U_{n+1}$. Indeed,  $U_{n+1}$ contains  mutually vertex-disjoint arcs  whose removal turns $U_{n+1}$ into the union of two subgraphs that are homeomorphic to $U_n$ and $T_1$, respectively (see the middle panel of Figure~\ref{Ushape}). Because $U_n$ is universal, our claim holds true. 
We next claim that any binary phylogenetic tree shape $T_{n+1}$ with $T_{n+1}=T_{k}+T_{n-k+1}$ can be a base tree of $U_{n+1}$. The right panel of Figure~\ref{Ushape} indicates that $U_{n+1}$ contains two distinct subsets of mutually vertex-disjoint arcs, one of which delineates $U_{n-k+1}$ (shown in thick gray line) and the other distinguishes $U_k$ from the remainder. Because both $U_{n-k+1}$ and $U_k$ are universal, our claim holds true.  Therefore, $U_{n+1}$ is universal. Hence, $U_n$ is universal for all $n$.

Next, we will provide a method to create infinitely many universal tree-based networks on $X$ from $U_n$. 
Let $\mathcal{E}_n$ be a tree-based network on $X$ obtained from $U_n$ by specifying a permutation $\pi_0$ of $X$.  In what follows, we use the same notation $i$ both for a leaf labelled  $i$ and for the terminal arc incident with $i$. A \emph{crossover} $\sigma_{ij}$  refers to a pair of crossed additional arcs between two distinct terminal arcs $i$ and $j$ as described in Figure~\ref{crossover}. Note that $\sigma_{ij}$ can be viewed as representing the transposition $(i,\,j)$ of the labels. For any permutation $\pi_1$ ($\neq\pi_0$) of $X$, there is a series  of adjacent crossovers that converts $\pi_0$ into $\pi_1$ and then vice versa (note that any permutation can be expressed as a product of transpositions and that the symmetric group $S_n$ is generated by the adjacent transpositions). Then, by sequentially adding  $n!-1$ series of crossovers, we can construct a universal tree-based network $\mathcal{U}_n$ on $X$ from $\mathcal{E}_n$. Moreover, it is possible to create infinitely many universal tree-based networks on $X$ because we may add an arbitrary number of redundant crossovers  among the terminal arcs of $\mathcal{U}_n$.
This completes the proof. 
\end{proof}

\begin{figure}[htbp]
\centering
\includegraphics[width=1\textwidth]{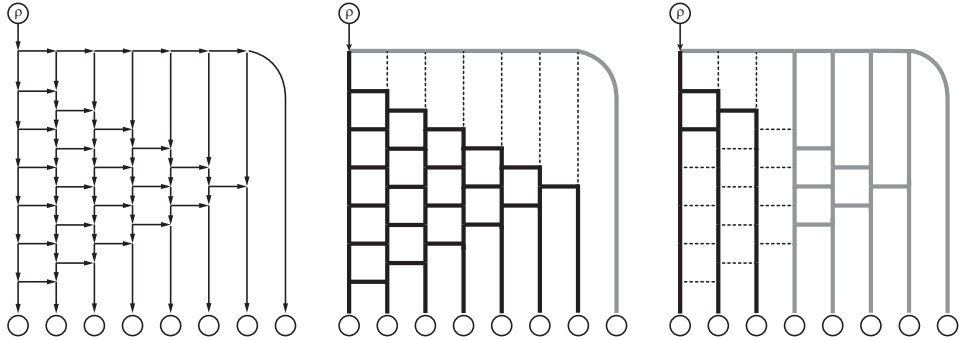}
\caption{The first part of the proof of Theorem~\ref{thm}. 
The left panel is an illustration of $U_n$ for $n=8$. The other panels show examples of $T_{n+1}$ in $U_{n+1}$ for $n+1=8$, and the right panel describes the case of $T_{n+1}=T_k + T_{n-k+1}$ with $k=3$. \label{Ushape}}
\end{figure}

\begin{figure}[htbp]
\centering
\includegraphics[width=0.4\textwidth]{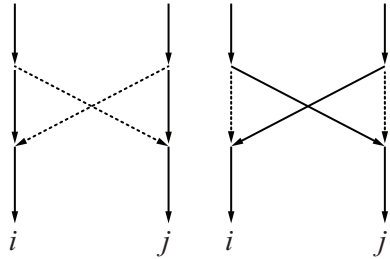}
\caption{The second part of the proof of Theorem~\ref{thm}.  
 Left: A \emph{crossover} $\sigma_{ij}$ is defined to be a pair of crossed additional arcs placed between arcs $i$ and $j$ ($i\neq j$) after subdividing both arcs twice. Right: When the two arcs in $\sigma_{ij}$ are selected as tree arcs, $\sigma_{ij}$ represents the transposition $(i\;j)$. 
% As any permutation can be expressed as a product of transpositions,  we can construct a universal tree-based network $\mathcal{U}_n$ on $X$ from $\mathcal{E}_n$ by sequentially adding crossovers among $n$ terminal arcs. 
% Notice that the symmetric group $S_n$ of degree $n$ can be represented using a series of crossovers among $n$ terminal arcs. 
  \label{crossover}}
\end{figure}

We note that the construction described in the proof of Theorem~\ref{thm} adds more arcs than necessary (\textit{cf.} Figure 1 in~\cite{FS}). It would be interesting to consider how to construct universal tree-based networks on $X$ with the smallest number of arcs. 

\section*{Comments}
We studied Problem~\ref{problem} independently from Louxin Zhang~\cite{LX}.

\section*{Acknowledgement}
The author thanks Kenji Fukumizu and  the anonymous reviewers for their careful reading and helpful comments. 

\bibliographystyle{amsplain}

\bibliography{universalTBN}

\end{document}